\newcommand{\KYCS}{\ensuremath{\mathtt{KYChain}}\xspace}
\newcommand{\Setup}{\ensuremath{\mathtt{Setup}}\xspace}
\newcommand{\Register}{\ensuremath{\mathtt{RegisterU}}\xspace}
\newcommand{\Commit}{\ensuremath{\mathtt{SubmitU}}\xspace}
\newcommand{\Update}{\ensuremath{\mathtt{UpdateU}}\xspace}
\newcommand{\ProofVerify}{\ensuremath{\mathtt{Certify}}\xspace}
\newcommand{\Prove}{\ensuremath{\mathtt{CertifyU}}\xspace}
\newcommand{\Certify}{\ensuremath{\mathtt{CertifyC}}\xspace}
\newcommand{\Onboard}{\ensuremath{\mathtt{Verify}}\xspace}
\newcommand{\Time}{\ensuremath{\mathtt{Time}}\xspace}
\newcommand{\Append}{\ensuremath{\mathtt{Append}}\xspace}
\newcommand{\Search}{\ensuremath{\mathtt{Search}}\xspace}
\newcommand{\PRF}{\ensuremath{\mathtt{PRF}}\xspace}
\newcommand{\DS}{\ensuremath{\mathtt{DS}}\xspace}
\newcommand{\Sign}{\ensuremath{\mathtt{Sign}}\xspace}
\newcommand{\Verify}{\ensuremath{\mathtt{Vrfy}}\xspace}
\newcommand{\SE}{\ensuremath{\mathtt{SE}}\xspace}
\newcommand{\Oracle}{\ensuremath{\mathcal{O}}\xspace}
\newcommand{\Oreg}{\ensuremath{\mathtt{Oreg}}\xspace}
\newcommand{\Oadd}{\ensuremath{\mathtt{Oadd}}\xspace}
\newcommand{\Oupdate}{\ensuremath{\mathtt{Oupdate}}\xspace}
\newcommand{\Ocommit}{\ensuremath{\mathtt{Osubmit}}\xspace}
\newcommand{\Oprove}{\ensuremath{\mathtt{Oprf}}\xspace}
\newcommand{\Ocert}{\ensuremath{\mathtt{Ocert}}\xspace}
\newcommand{\Ocorrupt}{\ensuremath{\mathtt{Ocor}}\xspace}
\newcommand{\privacy}{\ensuremath{\mathtt{dc}}\xspace}
\newcommand{\authentication}{\ensuremath{\mathtt{cc}}\xspace}
\newcommand{\User}{\ensuremath{\mathtt{U}}\xspace}
\newcommand{\Ledger}{\ensuremath{\mathtt{PL}}\xspace}
\newcommand{\Bank}{\ensuremath{\mathtt{C}}\xspace}
\newcommand{\param}{\ensuremath{\mathtt{pp}}}
\newcommand{\interactive}[4]{\ensuremath{\langle #1 (#2),\allowbreak #3 (#4)\rangle}}
\newcommand{\squarebracket}[1]{\ensuremath{[#1]}}
\newcommand{\uid}{\ensuremath{\mathtt{uid}}\xspace}
\newcommand{\upi}{\ensuremath{\mathtt{upi}}\xspace}
\newcommand{\Doc}{\ensuremath{\mathtt{kyc}\mbox{-}\mathtt{data}}\xspace}
\newcommand{\type}{\ensuremath{\mathtt{type}}\xspace}
\newcommand{\data}{\ensuremath{\mathtt{data}}\xspace}
\newcommand{\desc}{\ensuremath{\mathtt{certL}}\xspace}
\newcommand{\acc}{\ensuremath{\mathtt{accL}}\xspace}
\newcommand{\auth}{\ensuremath{\mathtt{auth}}\xspace}
\newcommand{\cdata}{\ensuremath{\mathtt{cdata}}\xspace}
\newcommand{\cdesc}{\ensuremath{\mathtt{ccertL}}\xspace}
\newcommand{\cacc}{\ensuremath{\mathtt{caccL}}\xspace}
\newcommand{\recid}{\ensuremath{\mathtt{rid}}\xspace}
\newcommand{\etime}{\ensuremath{\mathtt{time}}\xspace}
\newcommand{\policy}{\ensuremath{\Psi}\xspace}
\newcommand{\bool}{\ensuremath{\mathtt{bool}}\xspace}
\newcommand{\certificate}{\ensuremath{\mathtt{cert}}\xspace}
\newcommand{\query}{\ensuremath{\mathtt{query}}\xspace}
\newcommand{\entrylist}{\ensuremath{\mathtt{Tlist}}\xspace}
\newcommand{\abort}{\ensuremath{\mathtt{abort}}\xspace}
\newcommand{\seed}{\ensuremath{\mathtt{seed}}\xspace}
\newcommand{\Log}{\ensuremath{\mathtt{Log}}\xspace}
\newcommand{\RegLog}{\ensuremath{\mathtt{RLog}}\xspace}
\newcommand{\Entry}{\ensuremath{\mathtt{rec}}\xspace}
\newcommand{\Tran}{\ensuremath{\mathtt{t}\mbox{-}\mathtt{rec}}\xspace}
\newcommand{\Exp}{\ensuremath{\mathtt{G}}\xspace}
\newtheorem{remark}{Remark}
\renewcommand\footnotetextcopyrightpermission[1]{} 
\begin{document}
\title{KYChain: User-Controlled KYC Data Sharing and Certification}
	
\author{Constantin C\u{a}t\u{a}lin Dr\u{a}gan}
\email{c.dragan@surrey.ac.uk}
\affiliation{%
	\institution{Surrey Centre for Cyber Security\\ 
		University of Surrey\\Guildford, United Kingdom }	
}

\author{Mark Manulis}
\email{m.manulis@surrey.ac.uk}
\affiliation{%
\institution{Surrey Centre for Cyber Security\\ 
	University of Surrey\\Guildford, United Kingdom }	
}

\begin{abstract}
Under Know Your Customer (KYC) regulations, financial institutions are required to verify the identity and assess the trustworthiness of any new client during on-boarding, and maintain up-to-date records for risk management. These processes are time consuming, expensive, typically have sub-par record-keeping steps, and disadvantage clients with nomad lifestyle. In this paper, we introduce KYChain as a privacy-preserving certification mechanism that allows users to share (certified) up-to-date KYC data across multiple financial institutions. We base KYChain on immutable ledgers and show that it offers confidentiality and certification compliance of KYC data.

%

\end{abstract}

\keywords{Know Your Customer, Privacy-Preserving, Distributed Ledger Technology, Certification}

\maketitle

\section{Introduction}

Know your customer, or simply KYC, is a regulated process \cite{EU-KYC, UK-KYC} requiring financial institutions (FIs, e.g., banks) to verify identities and check transactional behaviours of their clients to facilitate detection of suspicious activities (e.g. money laundering). Typical implementations of KYC compliance require customers to provide due dilligence information to their FIs, starting with initial personal information during the on-boarding stage and reporting any subsequent updates while their business relationship exists. Recent studies show that an on-boarding process can take up to 32 days/customer, greatly impacting the overall KYC compliance costs, which can be up to \$20k/year \cite{Marous} for each new client. These costs are then passed to customers in the form of high transaction fees. Moreover, inadequate handling of KYC data (e.g. duplicate or confusing requests, and lack of common KYC standards from different FIs) have lead to 12\% of corporate clients changing their FI in 2017 \cite{Thomsonreuters}. 
%
%
There is a spread of commercial KYC services (e.g., Trulioo, Pegasystems, LexiNexis, Deloitte KYC Start) offered by companies that operate on customer's data and assist FIs in the verification process. These solutions do not allow re-use of KYC data across multiple FIs and more importantly do not provide users with full control over their KYC data, a key requirement behind recent GDPR regulations. More recently, some commercial KYC services (e.g., Coinfirm, Tradle, KYC Legal) have adopted blockchain technologies and proprietary mechanisms to facilitate secure exchange of due diligence information between multiple FIs. These services must still be trusted with confidentiality of the customer's KYC data. 
The few existing academic approaches focus either on re-using certified KYC data from one FI to another without re-certification, but do not provide confidentiality 
\cite{ParraMoyano2017}, or by sharing the KYC data in a private-preserving manner with FIs that perform their own certification \cite{DoubleBlind}.

\paragraph{Contribution} We propose KYChain, a privacy-preserving certification protocol that enables secure sharing of up-to-date KYC data across multiple FIs and is fully controlled by the clients. At the core of KYChain is an immutable ledger that stores hashes of (encrypted) KYC data and certificates that are issued for a particular customer. The corresponding ciphertexts encrypting KYC data and certificates obtained from other FIs are encrypted and stored in an off-chain storage. The client keeps decryption keys, which can be issued to FIs with whom the client wishes to establish or maintain a business relationship. The immutable ledger helps to keep track of all user-submitted changes for the KYC data. FIs can monitor the ledger to identify which KYC data has been updated and request keys from the customer to obtain these updates. In contrast to \cite{ParraMoyano2017, DoubleBlind}, KYChain can help to reduce the on-boarding time of new clients through the possibility of reusing (certified) KYC data across multiple FIs. This would reduce costs associated with the ongoing monitoring of changes to KYC data by the FIs through automated detection of updates on the ledger and off-chain storage.  KYChain can be offered as a service by an entity who would be running the off-chain storage without jeopardizing confidentiality of customer's data. Furthermore, we define the security properties that enforce guarantees over confidentiality of the KYC data and authenticity of  certification for KYC data compliance.

\section{KYChain Model and Requirements}

\subsection{Entities: Clients, Ledger, Certifiers}

\paragraph{Clients}
We model KYC clients by their unique personal identifier \upi, e.g., name and personal numeric code or passport number. This unique \upi is used upon registration to assess the actual identity of the client and avoid fraudulent registrations. Clients then generate their own private/public key pair, and can use their public key as a cryptographic identity in the system. Moreover, we allow clients to register multiple public keys as long as they are linked to their \upi. This registration is handled by trusted certifiers who keep log of matched public keys and \upi.

	
\paragraph{Public Ledger} 
KYChain adopts a distributed public ledger with an assumed off-chain storage for the records. Clients will store their KYC related data and obtained certificates in an encrypted form off-chain with corresponding hashes committed into the ledger to guarantee integrity. For simplicity, we model this ledger/storage combination as a single entity that realises an {\em append-only list} and adds a timestamp to each record it receives. 
Additionally, we assume that search queries can be performed over the ledger and the off-chain storage based on timestamps and the information contained in committed records. Formally, we define the public ledger \Ledger=(\Setup, \Time, \Append, \Search) with the following algorithms:
\begin{description}
	\item[\Setup($\lambda$):] \param. Initializes the append-only list, starts the tamper-proof clock, and returns the public parameters \param;
	
	\item[\Time(\;):] \etime. Returns the current time from the internal clock;
	
	\item[\Append(\Entry):] $\Tran \cup \{\bot\}$. Returns either a valid timestamped record \Tran= (\etime, \Entry) for the input \Entry received at time $\etime \gets \Time()$; otherwise an error symbol $\bot$;
	
	\item[\Search(\query):] \entrylist. Returns a list of timestamped records \entrylist that satisfy the search requirements in \query.
\end{description}

\paragraph{Public Ledger with external database}
For ease of description, we refer to the public ledger as a single entity \Ledger. However, we consider a hybrid approach for the public ledger instantiation, with an external database $\mathtt{DB}$  for storage and a public blockchain $\mathtt{BL}$ for integrity. More precisely, for each record \Entry submitted by a client, the timestamped record \Tran= $(\etime,\Entry)$ is first recorded by $\mathtt{DB}$ and its hash $\hash(\Tran)$ is then committed into the blockchain $\mathtt{BL}$. This approach allows clients and certifiers access to a search functionality, performed over the database $\mathtt{DB}$, and extract timestamped records. Furthermore, only clients can update or remove their KYC data from the database. 
We can use {\em Ethereum} and {\em Bitcoin} as existing implementation for our blockchain. More details on the setup are provided in Section \ref{sec: constr}. 


\paragraph{Certifiers}
Clients commit into the ledger some digital representation of their KYC data, e.g., scans of passports, ID cards, utility bills, photographs, etc. A certifier interacts with the client to verify that their digital information matches the clients' personal identifier \upi. If this check is successful, the certifier produces a certificate attesting to the correctness of the client's KYC data. 
We assume that each certifier has some policy \policy defined over KYC data and that certificates are issued only if client's KYC data $A$ satisfies the certifier's policy, i.e., $\policy(A)= 1$. We assume that all eligible certifiers are trusted and publicly known to all parties in the system. 

\subsection{KYC Data and Certificates}

\paragraph{KYC Data} 
We follow a specific template when modeling the KYC data:
\[ 
\Doc = (\pk,\type,\data,\desc,\acc)
\]
\begin{itemize}
\item $\pk$ is the public key of the client that submits this KYC data;

\item \type describes what type of KYC data it is, e.g., {\em passport, id card, location, occupation, bills, etc. };

\item \data is a digital copy of the KYC data;	

\item \desc 
is a list of all certificates issued for this KYC data. 

\item \acc 
enumerates all certifiers that can access the client's KYC data. The list starts empty, and then gets updated by the client. 

\end{itemize}

%
%
%

\paragraph{Linking KYC Data with the Ledger}
KYChain does not process private information of the clients. Given some \Doc = (\pk,\type,\allowbreak\data,\desc,\acc), the client first encrypts its KYC data, certificates and certifiers, and adds an authenticator to prove the origin. This results in the KYC data record of the form:
\begin{equation}\label{eq: entry}
\Entry = (\pk, \recid, \type,\cdata,\cdesc,\cacc, \auth)
\end{equation}
\begin{itemize}
	
	\item \recid is a unique identifier for this record; 
	
	\item \cdata is a ciphertext resulted by encrypting the value \data; 
	
	\item \cacc is a list of ciphertexts resulted from encrypting the certifiers from \acc;
	
	\item \cdesc is a list of ciphertexts formed by encrypting the obtained certificates from \desc;
	
	\item \auth is an authenticator over (\type, \cdata, \cdesc, \cacc), that can be publicly verified using client's public key \pk.
	
\end{itemize}

The ledger stores timestamped records built on the records clients submit. For simplicity, we adapt the notation \Tran=(\etime, \Entry) to  
\begin{equation}\label{eq: transaction}
\Tran = (\pk, \etime, \recid, \type,\cdata,\cdesc,\cacc, \auth)
\end{equation}
\begin{itemize}
	\item \etime is the time the record has been received.
	
\end{itemize}


\subsection{KYChain Certification Scheme: Definition}
The core of KYChain are rigorous protocols for establishing the identity of potential clients, measuring their degree of trustworthiness, and continued monitoring for risk assessment. 
We mirror the on-boarding process of a client with some FI by subsequent registration and certification of their KYC data performed by the FI. In addition, KYChain introduces a verification mechanism that allows clients to authenticate and share their certified KYC data with other FIs to speed up the eventual on-boarding process with them. Through the use of ledgers that store encrypted KYC data and certificates, previously authorized certifiers would be able to monitor changes to the client's KYC data and obtain updates using the same authorization mechanism as in the on-boarding phase. 

%
%

\def\defprivacy{
\procedure{$Exp_{\adv,\KYCS}^{\privacy,\beta}(\upi, \lambda)$}{
\pcln \param \gets \Setup (\lambda);\;\;  (\sk^*,\pk^*) \gets \Register(\param,\upi)\\
\pcln  \Log \gets \{(\sk^*,\pk^*, \upi)\}\\
\pcln (\type,\data_0,\data_1,\desc,\acc) \gets
      \adv_1^{\Oracle}(\param,\pk^*) \\	
\pcln  \Tran \gets \Commit(\param, \sk^*, \type, 
       \data_{\beta}, \desc, \acc)\\
\pcln \beta' \gets \adv_2^{\Oracle} (\Tran) \\	
\pcln \pcreturn (\beta' = \beta) \land \adv \mbox{ did not call } \Oprove(\pk^*, \pk_{\Bank},\Psi)	\mbox{ with } \\
\pcind[3]  \Tran \in \Search(\policy,\pk^*) \land \adv \mbox { did not call }   \Ocorrupt(\pk^*) 
}
}


\newsavebox\PmatrixO
\setbox\PmatrixO\hbox
{
	\scalebox{0.85}{
	$\begin{array}{l}
	\adv \mbox{ did not call } \Ocorrupt(\pk^*)\; \lor \\
	\lnot \Psi(A) \mbox { for } 
	A =\{ \Tran=(\pk^*,\cdot) 
	| \mbox{ for } \Tran \mbox{ added by } \Ocommit \mbox{ to } \Ledger \}		
	\end{array}$ 
	}
}

\def\defsound{
\procedure{$Exp_{\adv,\KYCS}^{\authentication}(\upi,\lambda)$}{
\pcln \param \gets \Setup (\lambda);\;\;  (\sk^*,\pk^*) \gets \Register(\param,\upi)\\
\pcln  \Log \gets \{(\sk^*,\pk^*, \upi)\}\\
\pcln \adv^{\Oracle,\Ocert}(\param,\pk^*) \\
\pcln\label{cc:return} 
   \hspace*{-2.5mm} \scalebox{0.80}{4.1}:\hspace{0.5mm} \pcreturn \; \adv \mbox{ made call to } \Ocert(\cdot) \mbox{ with } 
         \certificate \gets \Certify(\param,\pk^*, \cdot,\policy) \land \\
   \hspace*{3.5mm} \scalebox{0.80}{4.2}: \hspace{1.0mm} \true \gets \Onboard(\param,\pk^*,\policy,\certificate) \land \lnot \policy(\emptyset) \land 
         \adv \mbox{ is not running } \Oprove(\cdot)\; \land \\
   \hspace*{3.5mm} \scalebox{0.80}{4.3}:\hspace{1.0mm} \left(\usebox\PmatrixO\right)	 		
}}

\begin{figure*}[t!]
	\centering
	\begin{subfigure}[b]{0.48\textwidth}
		\centering
		\defprivacy
		\vspace*{0.0mm}
		\caption{Data Confidentiality with $\adv =(\adv_1,\adv_2)$.}
		\label{fig: conf}
	\end{subfigure}%
	~ 
	\begin{subfigure}[b]{0.48\textwidth}
		\centering
		\defsound
		\vspace*{1.0mm}
		\caption{Certification compliance. }
		\label{fig: auth}
	\end{subfigure}
	\caption{Security Properties, for \Oracle = \{\Oreg, \Ocorrupt, \Ocommit, \Oupdate, \Oprove\}.}
	\label{fig: sec}
\end{figure*}

\begin{definition}[KYChain Certification and Data Sharing]
The protocol \KYCS(\Ledger) = (\Setup, \allowbreak \Register, \allowbreak \Commit, \allowbreak \ProofVerify, \allowbreak \Onboard) has access to the ledger \Ledger, and consists of the following algorithms:
	
\begin{description}
\item[\Setup($\lambda$):] \param. Initializes the ledger \Ledger by calling its setup algorithm, defines the list of certifiers, and publishes the public parameters of the protocol \param.
	
\item[\Register(\param, \upi):] $  (\pk,\sk).$ 
Client \User generates locally a public-secret key pair (\pk,\sk), then he submits to certifier \Bank the public key $\pk$ with his personal identifier \upi. Certifier \Bank validates \upi, then stores (\pk,\upi).
	
\item[\Commit(\param,\sk, \Doc):] \Tran.  
Client \User(\pk,\sk) builds a record \Entry as described in Equation (\ref{eq: entry}) and 
calls \Ledger.\Append(\Entry). Ledger \Ledger verifies that \pk\ is registered, and that \auth is valid w.r.t \pk. After a successful verification, \Ledger computes \Tran according to Equation (\ref{eq: transaction}), with $\etime \gets \Ledger.\Time(\;)$. Finally, \Ledger stores \Tran locally, before sending a copy to \User. 

	
\item[\Update(\param, \sk,\recid,\squarebracket{\data'}, \squarebracket{\acc'}, \squarebracket{\desc'}):] $\Tran'.$
Client \User (\pk, \sk) uses this algorithm to update the timestamped record \Tran indexed by \recid with one or more of the following: $\data', \acc', \mathtt{desc}'$. Client \User retrieves \Tran from \Ledger, extracts the initial \Doc, and updates it to $\Doc'$ with changed data. 
Then, it performs $\Tran' \gets \Commit (\param,\allowbreak \sk,\allowbreak \Doc')$. 


\item[$\ProofVerify(\param,\sk_{\User},\sk_{\Bank}, \policy):$]
    \interactive{\User} {\certificate}{\Bank}{\certificate}.  
It is an interactive algorithm run between client $\User(\pk_{\User},\sk_{\User})$ and the certifier $\Bank(\pk_{\Bank}, \allowbreak \sk_{\Bank})$, with \Bank establishing policy \policy. Both parties have access to the information stored in \Ledger. 
\begin{itemize}
	\item $\Prove(\param,\sk_{\User}, \pk_{\Bank}, \policy)$ is run by client \User by interacting with \Certify to authenticate and show compliance with the policy \policy. The algorithm returns either a valid certificate \certificate, or abort with $\bot$;
	
	\item $\Certify(\param, \sk_{\Bank}, \pk_{\User}, \policy)$ run by certifier \Bank s.t. interacting with an authenticated and policy compliant \User it produces a certificate \certificate for this client; otherwise aborts with $\bot$;
\end{itemize}



\item[$\Onboard(\param, \pk_{\User}, A, \policy, \certificate):$] \bool. Run by any party that has access to the certificate \certificate, and used to verify that client 
$\User(\pk_{\User}, \cdot)$ has a valid certificate \certificate over policy \policy that is satisfied by set $A$ of KYC data. 
Typically, the verifier obtaines the certificate and KYC data following an authorisation from the client. 
\end{description}

\end{definition}

\subsection{Security Properties}
As security guarantees for our protocol, we focus on {\em data confidentiality} for the client's KYC data, and {\em certification compliance}.

\paragraph{Oracles}
For our experiments we consider that the adversary can register multiple clients, but is challenged on a single client $\User(\pk^*,\sk^*)$ generated by the experiment. 
The adversary can directly interact with the ledger \Ledger, and call all algorithms offered by \Ledger with the exception of \Setup, therefore, we do not model them as oracles. Moreover, the verification can be performed by anyone. Given the protocol \KYCS, we have the following list of oracles, that an adversary can access:
\begin{itemize}
\item \Oreg (\upi): \pk. Calls $(\pk,\sk) \gets \Register(\param,\upi)$, stores (\pk,\sk,\upi) internally in \Log, and returns \pk.
\item \Ocorrupt(\pk): \sk. Finds in $(\pk,\sk,\upi) \in \Log$, and returns \sk.

\item \Ocommit(\pk,\Doc ): $\Tran \cup \{\bot\}$. Finds $(\pk,\sk,\upi) \in \Log$, and returns $\Tran \gets \Commit(\param, \allowbreak \sk, \allowbreak \Doc)$. Otherwise, it returns the error symbol $\bot$. 
\item \Oupdate(\pk,\recid,$\cdot $): $\Tran'\cup \{\bot\}$. It searches for $(\pk,\sk,\upi) \in \Log$, and if no such entry is found it returns $\bot$. Otherwise, it returns $\Tran' \gets$ \Update(\param, $\sk$, \recid, $\cdot$).

\item $\Oprove(\pk_{\User}, \pk_{\Bank}, \policy)$: \certificate. Both $\pk_{\User}$ and $\pk_{\Bank}$ have to be in \Log. 
The adversary plays the role of a malicious certifier $\pk_{\Bank}$ by interacting with 
$\Prove(\param, \pk_{\User}, \policy)$.

\item $\Ocert(\pk_{\User},\pk_{\Bank},\policy): \certificate$. The adversary plays the role of a malicious client $\pk_{\User}$ by interacting with $\Certify(\param, \sk_{\Bank},\allowbreak \pk_{\User},\allowbreak \policy)$. Preliminarily, both $\pk_{\User}$ and $\pk_{\Bank}$ are verified to be registered.
   
\end{itemize}

\paragraph{Data Confidentiality}
Intuitively, the timestamped records in the ledger should not leak information about their data with the exception of the meta-information, i.e, type, time and public key. We model this property using a PPT adversary $\adv=(\adv_1,\adv_2)$ that needs to distinguish between two different KYC data by seeing a timestamped record for one of them in the ledger. More precisely, for a client $(\sk^*,\pk^*,\upi)$ the adversary selects 2 different KYC data: $\data_0$ and $\data_1$, and receives the timestamped record of $\data_\beta$ for a fixed bit $\beta\in \{0,1\}$ unknown to the adversary. The adversary has access to the functionalities of \KYCS: register, submit, update and prove KYC data (via oracles \Oreg,\Oadd, \Oupdate, \Oprove), together with the ability to corrupt (by calling oracle \Ocorrupt). The adversary wins if he can provide a guess $\beta'$ such that $\beta =\beta'$ under the condition he didn't ask for the secret key of the client $(\pk^*,\sk^*,\upi)$ (using the corruption oracle \Ocorrupt) and didn't ask ask for a decryption of the timestamped record \Tran (using the proof oracle \Oprove). We formalize this property in Figure \ref{fig: conf}. This property can be extended to show confidentiality for the access list \acc, and credential list \desc.

\begin{definition}[Data Confidentiality]
$\KYCS$ satisfies data confidentiality, if for any PPT adversary $\adv$ and any \upi, the following advantage is negligible in $\lambda$:
\[
\advantage{\privacy}{\adv,\KYCS}[] = 
\abs{\prob{Exp_{\adv,\KYCS}^{\privacy,\beta}(\upi,\lambda) =1} - \frac{1}{2}}.
\]	
\end{definition}
	
\paragraph{Certification Compliance}
Honest certifiers would only be able to create certificates for authenticated clients that satisfy their policy. In Figure \ref{fig: auth}, we model a PPT adversary \adv\ that needs to convince a certifier to create a valid certificate (that can be verified) when he doesn't satisfy the policy, or when he impersonated another client or certifier. The adversary is given access to registration, submission, update and prove \KYCS functions (via oracles \Oreg,\Oadd, \Oupdate, \Oprove, \Ocert), and the capability to corrupt (with oracle \Ocorrupt). We exclude trivial policies, \policy($\emptyset$)=1, and restrict the adversary not to run \Ocert and \Oprove simultaneously.

\def\defreg{
\procedure{\Register(\param, \upi)}{
\pcln (\pk, \mathtt{sigk}) \gets \DS.\kgen(\lambda)\\
\pcln \seed \sample \{0,1\}^\lambda\\
\pcln \sk = (\pk,\mathtt{sigk},\seed)\\
\pcln \RegLog \gets \RegLog \cup \{(\pk,\uid)\}\\
\pcln \pcreturn (\pk, \sk)	
}}

\def\defver{
\procedure{$\Onboard(\param, \pk_{\User}, A, \policy, \certificate)$}{
\pcln \pcfor (\pk_{\User}, \etime, \recid, \type,\cdata,\\
	\pcind[4] \desc,\cacc, \auth) \in A\\
     \pcind[3] \pcif \lnot \DS.\Verify(\pk_{\User}, (\type,\cdata,\\
          \pcind[5] \desc,\cacc), \auth)\\
     \pcind[3] \pcthen\ \abort\\	
\pcln \pcreturn \DS.\Verify(\pk_{\User}, A, \certificate)\; \land\\
\pcind[5] \policy(A)			
}}

\def\defsub{
\procedure{\Commit(\param,\sk, \Doc)}{
\pcln (\pk, \mathtt{sigk},\seed) \gets \sk\\
\pcln (\pk,\type, \data, \desc, \acc ) \gets \Doc\\	
\pcln \pcif \lnot (\pk, \cdot) \in \RegLog \pcthen\ \pcreturn \bot\\
\pcln \recid \sample \{0,1\}^\lambda\\
\pcln\label{submit:keys} \pcfor i \in \{1,2,3\} \pcdo 
       \pcind[1] \key_i \gets \PRF_{\seed} (\pk,\allowbreak \recid, i)\\
\pcln \cdata \gets \SE.\enc(\key_1, \data)\\
\pcln \cdesc \gets \SE.\enc(\key_2, \desc)\\
\pcln \cacc \gets \SE.\enc(\key_3, \acc)\\
\pcln \auth \gets \DS.\Sign(\mathtt{sigk},\\
    \pcind[6] (\type,\cdata,\desc,\cacc))\\
\pcln \Entry \gets (\pk, \recid, \type, \cdata, \desc,\\
   \pcind[6] \cacc, \auth)\\
\pcln \Tran \gets \Ledger.\Append(\Entry)\\
\pcln \pcreturn \Tran
}}

\def\defupdate{
\procedure{\Update(\param, \sk,\recid,\squarebracket{\data'}, \squarebracket{\acc'}, \squarebracket{\mathtt{desc}'})}{
\pcln (\pk,\mathtt{sigk},\seed) \gets \sk\\
\pcln \Tran \gets \Ledger.\Search(\recid)\\
\pcln \pcif \lnot (\pk, \cdot) \in \RegLog \lor \Tran = \bot \pcthen\ \pcreturn \bot\\
\pcln (\pk, \etime, \recid, \type, \cdata, \\
    \pcind[5] \desc, \cacc, \auth) \gets \Tran\\
\pcln \pcfor i \in \{1,2,3\} \pcdo 
\pcind[1] \key_i \gets \PRF_{\seed} (\pk,\allowbreak \recid, i)\\
\pcln \pcif \data' \neq \bot \pcthen\ \data \gets \data'\\
    \pcind[3] \pcelse \data \gets \SE.\dec(\key_1, \cdata)\\
\pcln \pcif \desc' \neq \bot \pcthen\ \desc \gets \desc'\\
    \pcind[3] \pcelse \desc \gets \SE.\dec(\key_2, \desc)\\    
\pcln \pcif \acc' \neq \bot \pcthen\ \acc \gets \acc'\\
    \pcind[3] \pcelse \acc \gets \SE.\dec(\key_3, \cacc)\\
\pcln \Doc' \gets (\pk, \recid, \type, \data, \\
   \pcind[8] \desc, \acc )\\	
\pcln \Tran \gets \Commit(\param, \sk, \Doc')\\	
\pcln \pcreturn \Tran	
}}

\def\defproofTree{
\pseudocode{%
\Prove(\param,\sk_{\User} , \pk_{\Bank} \policy) \mbox { with } \sk_{\User}=(\pk_{\User},\mathtt{sigk}_{\User},\seed_{\User})
	\>\> \Certify (\param,\pk_{\User}, \sk_{\Bank} \policy) \mbox { with } \sk_{\Bank}=(\pk_{\Bank},\mathtt{sigk}_{\Bank},\seed_{\Bank}) \\ 
[0.1\baselineskip][\hline] \>\> \\ [-0.5\baselineskip] 
\pcln K \gets \emptyset; \hspace*{2mm} L \gets \Search(\policy, \pk_{\User}) \>\> A \gets \emptyset; \hspace*{2mm} L \gets \Search(\policy, \pk_{\User}) 
    \> \pclnr\\
\pcln \pcfor (\pk_{\User}, \etime, \recid, \type,\cdata,\desc,\cacc, \auth) \in L
    \>\> \pcfor (\pk_{\User}, \etime, \recid, \type,\cdata,\desc,\cacc, \auth) \in L \>\pclnr\\
\pcln \pcind \pcif \lnot \DS.\Verify(\pk_{\User}, (\type,\cdata,\desc,\cacc), \auth) 
\>\> \pcind \pcif \lnot \DS.\Verify(\pk_{\User}, (\type,\cdata,\desc,\cacc), \auth)  \> \pclnr \\
\pcln \pcind \hspace*{-1mm} \pcthen\ \abort
\>\>  \pcind \hspace*{-1mm} \pcthen\  \abort  \> \pclnr \\
\pcln \pcind \pcelse \\
\pcind[5] \key_1 \gets \PRF_{\seed_{\User}}(\pk_{\User},\recid,1) \\
\pcind[5] K \gets K \cup \{\key_1\} \>\> \\
\pcln \>\sendmessageright*[8mm]{K} \\ 
    \>\> \pcfor 0 \leq i \leq |L| \pcdo \> \pclnr\\
    \>\> \pcind (\pk_{\User}, \etime, \recid, \type,\cdata,\desc,\cacc, \auth) \gets L[i] \>\pclnr\\
    \>\> \pcind A \gets A \cup \{(\type, \etime, \SE.\dec(K[i], \cdata) ) \> \pclnr\\ 
    \>\> \pcif \lnot \policy(A) \pcthen\ \abort \> \pclnr\\
      \>\> \certificate \gets \DS.\Sign(\sk_{\Bank}, A) \>\pclnr \\
\pcln \pcreturn \certificate \>\sendmessageleft*[8mm]{\certificate}    \> \pcreturn \certificate	\>\pclnr	
}}

\begin{figure*}[!ht]
	\begin{pchstack}
		\begin{pcvstack}
			\defreg \pcvspace \defver
		\end{pcvstack}
		\pchspace \defsub \pchspace \defupdate
	\end{pchstack}
	\caption{The \Register, \Commit, \Update and \Onboard algorithms. }
	\label{fig: Reg-Sub-Upd}
\end{figure*}

\begin{definition}[Certification Compliance]
$\KYCS$ ensures certification compliance, if for any PPT adversary $\adv$ the following advantage is negligible in $\lambda$:
$$ 
\advantage{\authentication}{\adv,\KYCS}[] = 
	\prob{Exp_{\adv,\KYCS}^{\authentication}(\lambda) =1}.$$	
\end{definition}

\section{KYChain Scheme: Our Construction}

%

\subsection{Cryptographic Building Blocks}
Our system relies on standard cryptographic primitives that have classic security properties. We employ {\em pseudo-random functions} $\PRF: \{0,1\}^{\lambda} \times \{0,1\}^{\star}\to \{0,1\}^{poly(\lambda)}$ \cite{DBLP:journals/jacm/GoldreichGM86}, {\em digital signature scheme} \DS = (\kgen, \Sign, \Verify) \cite{DBLP:journals/tit/DiffieH76} that are existentially unforgeable under chosen message attack (EUF-CMA), and a {\em symmetric encryption scheme} \SE =(\kgen,\enc, \dec) with two security requirements: {\em indistinguishability under chosen plaintext attack (IND-CPA)}, and {\em wrong-key detection (WKD)} \cite{DBLP:conf/tcc/CanettiKVW10}. 
The WKD property states that any efficient adversary cannot successfully decrypt a ciphertext with a key different from the one used to encrypt it: $\dec(k',\enc(k,m)) = \bot$ for $k \neq k'$.  

\subsection{Generic KYChain Scheme}\label{sec: constr}

We provide an overview of our construction and how it integrates with KYC. We formalize the \Register, \Commit, \Update, \Onboard algorithms in Figure \ref{fig: Reg-Sub-Upd}, and the interactive algorithm \ProofVerify in Figure \ref{fig: constr}.

The setup phase is initialized by a trusted third party that starts the ledger \Ledger and defines the list of certifiers, accessible to all. 
{\color{red}} We consider a hybrid approach for the public ledger instantiation: external database for storage and a public blockchain for integrity. We can use existing public run blockchains, i.e., Bitcoin or Ethereum, and have each certifier \Bank maintain its own local database $\mathtt{DB}_\Bank$. Any record \Entry submitted by a client to a certifier, with be submitted as a timestamped record \Tran in its local database $\mathtt{DB}_\Bank$. Then, the certifier would send the timestamped record as a transaction to the public blockchain. In this scenario the certifier would be paying the transaction fee, associated to that transaction. Certifiers can optimize this process, and collect records from multiple clients received in a single day, and create a single transaction for all of them. 

\begin{remark}[Alternative Public Ledger Setup]
The above setup method has the advantage of being easily deployable with current FI infrastructure. However, an alternative would be to deploy a permissioned blockchain, with the certifiers acting as nodes. That is, the certifiers would be trusted to submit transactions to the rest of network. This would remove the need for transaction fees, but would require different trust assumptions, as certifiers would have the power to block or alter transactions. More likely, rigorous auditing mechanism would be needed to ensure certifiers do not or have not deviated from the protocol steps.  	
\end{remark}

To ease description we use a single \RegLog list with the public keys and personal identification number of all registered clients, that only certifiers can access. When joining the system, clients generate their own verification-signing keys $(\pk,\mathtt{sigk})$ together with a value \seed used to derive unique symmetric keys used to encrypt each KYC data. Clients register by submitting their public key \pk\ together with their personal identity \upi to the certifier $\pk_{\Bank}$. Once a client's identity is verified the \RegLog is updated  with (\pk,\upi). 

In practice, each certifier $\Bank(\pk_{\Bank},\sk_{\Bank})$ would have their own list $\RegLog_{\Bank}$ and would only share the public keys they have registered, together with personal identification of known untrustworthy clients or countries, i.e. black/grey lists \cite{FATF, BanksBlacklist,BanksBlacklistWSJ}. In case of honest clients no personal information is shared between different FIs. However, by sharing personal identifier of untrustworthy clients any FI would be able to block accounts and request resolution for their own clients deemed untrustworthy by other FIs.

\begin{figure*}[ht]
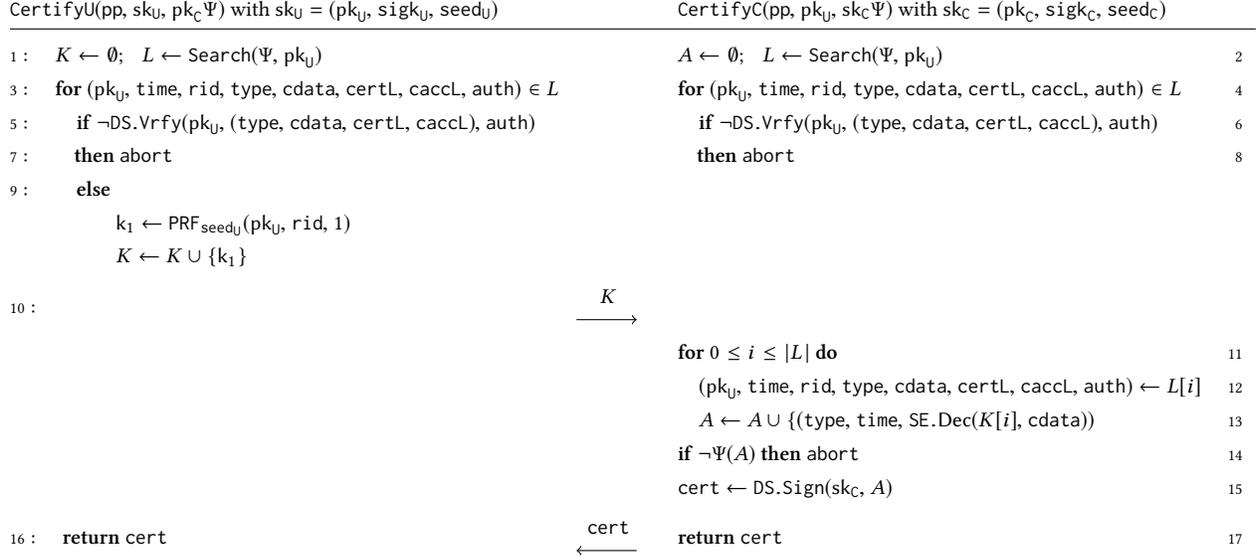

	\begin{pcvstack}
		\defproofTree
	\end{pcvstack}
	\caption{The \ProofVerify interactive algorithm. A preliminary step consists of verifying that both parties are registered. All communications are performed over an authenticated and confidential channel.}
	\label{fig: constr}
\end{figure*}

\begin{remark}[Offline and Online registration]
{\em Online registration} requires clients to fill online forms that contain their phone number, full name, current living address. This is coupled with a scan of their valid identity card or passport, and a recording that clearly shows their face. Certifiers check the validity of KYC data via communication with competent authorities, e.g. police, and verify that the person recorded matches the person on the scan KYC data. Further steps can be performed to enhance this process by validating the information in the recording w.r.t. their information on the form or scanned KYC data.

{\em Offline registration} is performed by certifiers that have physical location, e.g. banks, and assess the client face-to-face w.r.t. their KYC data. Then, they carry the registration online in the name of the client, who receives a private key at the end of the process. 
\end{remark}	

Registered clients can submit any KYC data \Doc=(\pk, \type, \data, \desc, \acc), by building a record \Entry defined as in Equation (\ref{eq: transaction}). As part of this process, a unique record identifier $\recid \sample \{0,1\}^\lambda$ is defined, and the unique symmetric keys are produced by 
$$\key_i \gets \PRF_{\seed} (\pk,\allowbreak \recid, i),$$ 
for $i \in \{1,2,3\}$. Then, the ciphertexts (\cdata, \cdesc, \cacc) are produced by calling the symmetric encryption scheme \SE with key $\key_1$ over \data, key $\key_2$ for \desc, and $\key_3$ for \acc. Finally, \DS is applied over (\type, \cdata, \cdesc, \cacc) to obtain signature \auth, and build record \Entry that is appended to the ledger \Ledger. 



Certification is done interactively between a client $\User(\pk_{\User},\sk_{\User})$ and a certifier $\Bank(\pk_{\Bank},\sk_{\Bank})$. Intuitively, the client provides the certifier with access to the KYC data and the certifier provides a signature over this set of KYC data. The authorization is performed through giving decryption keys for the encrypted KYC data that are logged in the ledger. First, the list of timestamped records $L$ for latest versions of the KYC data of client $\pk_{\User}$ that satisfy the policy \policy are taken from the ledger by both parties. Then, the client provides decryption keys for the data of all timestamped records in $L$ to \Bank. The certifier signs all KYC data, and sends the certificate to the client. If the timestamped records are invalid, i.e., their signature does not verify, or the client does not satisfy the policy, the certifier aborts.

Verification consists of checking that the certificate is valid w.r.t. KYC data provided. 
The list of KYC data can be provided by the client with the verifier additionally checking their timestamps against the ones in the ledger, or they can be extracted from the ledger by following Steps 1-8 from Figure \ref{fig: constr}. This would allow clients with a nomadic lifestyle to benefit from the same rights as all other clients. 

We allow clients to update KYC data they have submitted, by submitting new records. If clients do not have access to their initial KYC data, they can use their KYC data record identifier \recid, retrieve the timestamped record \Tran logged in the ledger, and extract the KYC data \Doc using decryption keys $\key_i \gets \PRF_{\seed} (\pk,\allowbreak \recid,i)$, for $i \in \{1,2,3\}$. Based on the values that client intends to update, i.e, the data \data, or list of certificates \desc, or access list \acc, the system would create a new KYC data $\Doc'$ where the required fields have been updated, and carry a new submission with $\Doc'$.

To satisfy {\em customer due diligence} requirements FIs need an in-depth knowledge of their clients to reason about their trustworthiness and likelihood of being involved in illegal activities. Changes in clients' lives, e.g., identity documents, occupation, address, etc, would be reflected by updates to the timestamps of the corresponding KYC data records in the ledger. FIs are able to get these updates through the same authorisation mechanism that they used to get access to client's KYC data during the onboarding phase.

\subsection{Security Analysis}

\begin{theorem}
The $\KYCS$ construction in Section \ref{sec: constr} offers data confidentiality, if $\;\SE$ is IND-CPA and $\PRF$ is a pseudo-random function.
\end{theorem}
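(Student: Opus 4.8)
The plan is a game-hopping proof that peels off the two assumptions one at a time, first the pseudorandomness of \PRF and then IND-CPA of \SE. Let $G_0$ be the real experiment $Exp_{\adv,\KYCS}^{\privacy,\beta}$ with a uniformly random challenge bit $\beta$. In $G_1$ I would replace every evaluation of \PRF keyed with the challenge client's seed $\seed^*$ (the seed stored inside $\sk^*$) by a lazily sampled truly random function $R$. Such an evaluation occurs in the challenge \Commit call on line~4 of the experiment and in every \Ocommit, \Oupdate and \Oprove query the adversary directs at $\pk^*$; a \PRF distinguisher can simulate all of these, since it generates $\pk^*$ and the signing key itself, runs every non-challenge client honestly, and answers the challenge client's key derivations with its own oracle (the public key is part of every \PRF input, so seed collisions between clients are harmless). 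Because $\Ocorrupt(\pk^*)$ is excluded by the winning predicate, $\seed^*$ never appears in the adversary's view, so $\abs{\prob{G_0 = 1} - \prob{G_1 = 1}}$ is bounded by the \PRF-distinguishing advantage.

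In $G_2$ I would argue that the keys $k_1^*,k_2^*,k_3^*$ produced inside the challenge \Commit call, namely $R(\pk^*,\recid^*,i)$ for the freshly drawn $\recid^*\sample\{0,1\}^\lambda$, are uniform and independent of everything else in the adversary's view. With all but $O(q^2/2^\lambda)$ probability over the $q$ oracle queries, $\recid^*$ differs from every record identifier appearing in another oracle call on $\pk^*$, so $R$ is queried on those three points nowhere else; condition on this event. Then $k_2^*$ and $k_3^*$, which encrypt \desc and \acc, can leak only through $\Ocorrupt(\pk^*)$, which is forbidden, so \cdesc and \cacc are already statistically independent of their plaintexts -- which is exactly what lets the statement be extended to \desc and \acc. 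The data key $k_1^*$ can additionally leak through an $\Oprove(\pk^*,\pk_{\Bank},\policy)$ call, but only for the records returned by $\Search(\policy,\pk^*)$; the challenge record \Tran is kept out of every legal such call by the winning predicate, and so is any re-encryption of \Tran that \Update may produce when its data argument is omitted, because such a record carries the same \type and the same plaintext $\data_\beta$ as \Tran, and since \Search can filter only on the public fields, a search returning it would also return \Tran.

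Finally, in $G_2$ the adversary's output is correlated with $\beta$ only through the ciphertexts $\SE.\enc(k,\data_\beta)$ created for \Tran and for the at most polynomially many records \Update can spawn from it, each under an independent, hidden, uniform key. I would close with a hybrid that rewrites these ciphertexts one at a time from $\SE.\enc(\cdot,\data_\beta)$ to $\SE.\enc(\cdot,\data_0)$: in the $j$-th step the reduction -- which knows $\beta$, being the party that simulates the experiment -- passes the message pair $(\data_\beta,\data_0)$ to its IND-CPA challenger for the $j$-th such ciphertext and samples fresh keys for all the others, so consecutive hybrids are indistinguishable by IND-CPA of \SE. After the last hybrid the view no longer depends on $\beta$, so on runs that respect the winning predicate the guessing probability is exactly $1/2$, giving $\advantage{\privacy}{\adv,\KYCS}[] \le$ the \PRF-distinguishing advantage plus $q$ times the IND-CPA advantage of \SE, plus $O(q^2/2^\lambda)$, which is negligible. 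I expect the main obstacle to be precisely this $G_2$ bookkeeping: one must walk through every oracle that can touch the challenge client and verify that none of them -- most delicately \Update followed by an \Oprove call on the freshly created record -- can be steered into handing out a key that decrypts an encryption of $\data_\beta$ without tripping the winning predicate; once that is settled, the two reductions themselves are routine.
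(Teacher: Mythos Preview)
Your proposal is correct and follows the same two-step skeleton as the paper: a \PRF hop that replaces the challenge client's key derivations by truly random values, followed by an IND-CPA reduction for \SE on the challenge ciphertext. You are in fact more careful than the paper in two places---you track the $O(q^2/2^\lambda)$ collision term for \recid and you run a hybrid over the additional encryptions of $\data_\beta$ that \Update can spawn---whereas the paper's proof makes a single IND-CPA reduction and does not discuss \Update at all.
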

\begin{proof}
Using unique keys derived via $\PRF$ to encrypt each KYC data together with the IND-CPA property for \SE provides sufficient guarantees for data confidentiality.
	
\textbf{Game $\Exp_0$:} We define experiment $\Exp_0$ as the data confidentiality experiment $Exp_{\adv,\KYCS}^{\privacy,\beta}(\upi,\lambda)$. Therefore, we trivially have
 	$$\prob{\Exp_0 = 1} = \prob{Exp_{\adv,\KYCS}^{\privacy,\beta}(\upi,\lambda) =1}.$$
 	
\textbf{Game $\Exp_1$:} We define experiment $\Exp_1$ as the experiment $\Exp_0$, except we replace 
$$\pcfor i \in \{1,2,3\} \pcdo \pcind[1] \key_i \gets \PRF_{\seed} (\pk,\allowbreak \recid, i),$$
from \Ocommit (line \ref{submit:keys} of \Commit) with 
$$\pcfor i \in \{1,2,3\} \pcdo \pcind[1] \key_i \gets \{0,1\}^\lambda.$$
This change should be undetected to the adversary, due to the pseudo-randomness property of the \PRF. Therefore, for $n$ \Ocommit oracle queries, it holds that
$$\abs{\prob{\Exp_0 = 1} - \prob{\Exp_1 = 1}} \leq 3n \times \advantage{\mathtt{prf}}{\bdv(\adv),\PRF}[].$$

\textbf{Reduction to IND-CPA Game:} Finally, we show that $\Exp_1$ can be reduced to IND-CPA of the symmetric encryption scheme. The IND-CPA adversary $\bdv(\adv)$ performs Steps 1-6 from game $\Exp_1$, except for Step 4 where he uses the encryption IND-CPA oracle query \texttt{Oenc($\cdot$)}. More precisely, he executes Steps 1-4 and 7-12 from \Commit, and replaces Step \ref{submit:keys}-6 with:
\begin{equation*}
\begin{split}
5:\hspace*{2mm} & \pcfor i \in \{2,3\} \pcdo \pcind[1] \key_i \gets \PRF_{\seed} (\pk,\allowbreak \recid, i)\\
6:\hspace*{2mm} & \cdata \gets \SE.\mathtt{Oenc}(\data)
\end{split}
\end{equation*}  
The key $\key_1$ is not generated, and the encryption of the KYC data is replaced with an oracle encryption. Moreover, \bdv(\adv) does not need to handle decryption queries as line ``\adv did not call \Oprove($\pk^*$, $\pk_{\Bank}$,$\Psi$) with $\Tran \in \Search(\policy,\pk^*)$'' makes these type of requests forbidden. The probability of $\adv$ to win game $\Exp_1$ is identical with the probability of \bdv(\adv) to win the IND-CPA experiment:
$$\prob{\Exp_1 = 1} = \prob{Exp_{\bdv(\adv),\SE}^{\mathtt{ind}-\mathtt{cpa}}(\lambda) =1}.$$	
As the advantage of an adversary is defined as the value greater than a random guess, i.e. 1/2, we have that the advantage of $\adv$ in $\Exp_1$ is $\abs{ \prob{\Exp_1 = 1} -1/2}$. Therefore, the following holds:
$$ \abs{ \prob{\Exp_1 = 1} -\cfrac{1}{2}} \leq \advantage{\mathtt{ind}-\mathtt{cpa}}{\bdv(\adv),\SE}[].$$	

The result of this theorem follow.
\end{proof}	

\begin{theorem}
The $\KYCS$ construction in Section \ref{sec: constr} offers certification compliance, if $\;\DS$ is EUF-CMA, $\SE$ is WKD, and $\;\PRF$ is pseudo-random function.
\end{theorem}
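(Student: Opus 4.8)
The plan is to combine a short chain of game hops with a case split over the two disjuncts of the winning predicate in Figure~\ref{fig: auth}: \textbf{Case~I}, where $\adv$ never queries $\Ocorrupt(\pk^*)$, and \textbf{Case~II}, where $\adv$ does corrupt $\pk^*$ but the KYC data $A$ it submitted for $\pk^*$ through \Ocommit (hence also \Oupdate) fails the policy, $\lnot\policy(A)$. Throughout I use that, by the winning predicate, $\adv$ is not running \Oprove concurrently and that $\lnot\policy(\emptyset)$. The three primitives play clearly separated roles: EUF-CMA of $\DS$ is invoked twice, once for the honest certifier's signature on the certified set and once for the client's record authenticator $\auth$; WKD of $\SE$ pins down what a certifier actually recovers when, inside \Certify, it decrypts a ledger ciphertext with an adversarially supplied key; and pseudorandomness of $\PRF$ removes $\adv$'s only possible handle on the symmetric keys whenever $\pk^*$ is uncorrupted.

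First I would perform two forgery hops. \emph{(i)} Whenever the experiment outputs $1$ there is an \Ocert call in which the honest certifier ran \Certify with client argument $\pk^*$ and returned a certificate $\certificate$ for which \Onboard accepts $(\pk^*,A,\policy,\certificate)$ for some $A$. Since the certifier's signing key is used only inside \Ocert, and only to sign the set it assembles in \Certify, I hop to a game that aborts if the accepting $A$ is not one of the sets the certifier ever signed; that event is an EUF-CMA forgery for $\DS$ and costs at most $\advantage{\mathtt{euf}-\mathtt{cma}}{\bdv(\adv),\DS}[]$. Afterwards $A$ equals the set $A^*$ assembled in that \Certify run. \emph{(ii)} By construction $A^*$ is formed from $L=\Search(\policy,\pk^*)$ after discarding every record whose $\auth$ fails to verify under $\pk^*$, so each contributing record carries a valid $\DS$ signature under $\pk^*$'s key over its $(\type,\cdata,\cdesc,\cacc)$ part. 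In Case~I, $\adv$ never learns $\sk^*=(\pk^*,\mathtt{sigk}^*,\seed^*)$, so I hop to a game that aborts if any such record was not produced inside an \Ocommit or \Oupdate call; this is again an EUF-CMA forgery, now under $\mathtt{sigk}^*$, costing at most $\advantage{\mathtt{euf}-\mathtt{cma}}{\bdv(\adv),\DS}[]$. After this hop, in Case~I every record behind $A^*$ is honest, hence of the form $\cdata=\SE.\enc(\key_1,\data)$ with $\key_1=\PRF_{\seed^*}(\pk^*,\recid,1)$.

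Next I argue both cases are infeasible. In \textbf{Case~I}, since $\seed^*$ is never exposed (no corruption, no \Oprove), I replace in every \Ocommit/\Oupdate execution the outputs $\PRF_{\seed^*}(\pk^*,\recid,i)$ by fresh uniform strings; a hybrid over the at most $n$ such executions costs $3n\times\advantage{\mathtt{prf}}{\bdv(\adv),\PRF}[]$. Now each ciphertext behind $A^*$ uses a uniform key $\key_1$ that is information-theoretically hidden from $\adv$, so when \Certify computes $\SE.\dec(K[i],\cdata)$ with the adversary-supplied $K[i]$, either $K[i]\ne\key_1$ and WKD forces the output to $\bot$ (a deviation being a WKD break, bounded by $\advantage{\mathtt{wkd}}{\bdv(\adv),\SE}[]$), or $K[i]=\key_1$ with probability at most $2^{-\lambda}$. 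Hence $A^*$ is empty or has only $\bot$ data values; but the certifier signs only when $\policy(A^*)=1$, impossible given $\lnot\policy(\emptyset)$ and the mild well-formedness requirement that a policy reject a data set all of whose values are $\bot$ (equivalently, add a ``no~$\bot$'' check to \Certify). In \textbf{Case~II}, $\adv$ holds $\sk^*$ and can mint valid records and correct keys, but by WKD the certifier still recovers from each ledger ciphertext exactly the encrypted plaintext (up to the WKD term already counted) and signs $A^*$ only if $\policy(A^*)=1$; matching the predicate's $A$ with $A^*$, they agree on the records \Search returns for $\policy$, so $\policy(A^*)=\policy(A)=0$, a contradiction, while if $\adv$ additionally appended (using $\sk^*$) records that do satisfy $\policy$ then those records are themselves part of $A$ and $\policy(A)=1$, so $\adv$ is not in Case~II. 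Summing the terms from the hops above bounds $\advantage{\authentication}{\adv,\KYCS}[]$ by a negligible function.

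I expect the main obstacle to be the step just described for Case~I (and its Case~II counterpart): making precise that a certifier faced with a record it cannot properly decrypt is never coaxed into certifying it. This is exactly where WKD (wrong key implies $\bot$) must be combined with pseudorandomness of $\PRF$ (so $\adv$ cannot supply the correct key) and the well-formedness assumption on policies; pinning down the bookkeeping between the predicate's $A$ and the set $A^*$ the certifier actually signs is the other fiddly point. By contrast, the two EUF-CMA reductions are entirely routine.
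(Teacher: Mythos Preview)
Your decomposition into the two disjuncts matches the paper's, and your Case~II treatment (PRF hop inside $\Ocommit$, then a WKD reduction) is essentially the paper's $\mathtt{Exp}_2$ argument, with the alignment of the predicate's $A$ and the certifier's signed set made explicit where the paper leaves it tacit. Where you diverge sharply is Case~I. The paper's proof of $\mathtt{Exp}_1$ does not touch EUF-CMA of $\DS$, $\PRF$, or WKD at all: it invokes the side assumption, stated only in the caption of Figure~\ref{fig: constr}, that $\ProofVerify$ runs over an \emph{authenticated} channel, and bounds $\mathtt{Exp}_1$ directly by a channel-authentication advantage. Its reasoning is that $\adv$ may well harvest the correct decryption keys by calling $\Oprove(\pk^*,\cdot,\policy)$ beforehand, but cannot relay them to the honest $\Certify$ while posing as $\pk^*$ without breaking that channel. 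So your route uses exactly the primitives named in the theorem, while the paper effectively trades the stated EUF-CMA hypothesis for an external channel assumption in this branch.

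This is also where your argument has a gap. You state that in Case~I ``$\seed^*$ is never exposed (no corruption, no $\Oprove$)'' and then run the PRF hop so the honest encryption keys become independent of $\adv$'s view. But a few lines earlier you (correctly) read the winning predicate as forbidding only \emph{concurrent} $\Oprove$; under the paper's reading, $\adv$ is free to call $\Oprove(\pk^*,\pk_{\Bank},\policy)$ first and obtain precisely the keys $K$ that decrypt the relevant $\cdata$. After your PRF hop those keys are the fresh random ones, but they are still handed to $\adv$ inside $\Oprove$, so when $\adv$ later feeds them to $\Certify$ the WKD branch never triggers and the certifier recovers the true data and signs. Your EUF-CMA hops (on the certifier's signature and on $\auth$) do nothing to block this relay. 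In short, your PRF/WKD route for Case~I only goes through if $\Oprove$ on $\pk^*$ is excluded entirely; the paper instead keeps $\Oprove$ available and closes the gap via the authenticated-channel assumption.
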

\begin{proof}
This security experiment measures the capabilities of an adversary to convince an honest certifier to issue an certificate either when the adversary is {\em impersonating an honest client that may satisfy the policy}, or when the adversary {\em doesn't satisfy the policy}. We split this experiment into two sub-experiments based on the winning condition of the adversary:
\begin{itemize}
\item $\mathtt{Exp}_1$: the adversary {\em impersonating an honest client}. This experiment is identical to $Exp_{\adv,\KYCS}^{\authentication}(\lambda)$, except that line 4.3 
is replaced with 
\begin{equation}\label{Exp1: return}
\begin{split}
4.3: \hspace*{10mm}           & \adv \mbox{ did not call } \Ocorrupt(\pk^*) \hspace*{16mm}
\end{split}
\end{equation}

\item $\mathtt{Exp}_2$: the adversary {\em doesn't satisfy the policy}. This experiment is identical to $\mathtt{Exp}_1$, except that line 4.3 from Eq. (\ref{Exp1: return}) is replaced with 
\begin{equation}\label{Exp2: return}
\begin{split}
4.3: \hspace*{8mm}           & \lnot \Psi(A) \mbox { for } A =\{ \Tran=(\pk^*,\cdot) | \\
           & \hspace*{4mm} \mbox{ for } \Tran \mbox{ added by } \Ocommit \mbox{ to } \Ledger \}	
\end{split}
\end{equation}
\end{itemize}

\textbf{Transition to $\mathtt{Exp}_1$ and $\mathtt{Exp}_2$:} The advantage of adversary $\adv$ is bounded by the above two probabilities, such that: 
$$
\advantage{\authentication}{\adv,\KYCS}[] \leq \prob{\mathtt{Exp}_1 =1} + \prob{\mathtt{Exp}_2 =1}.
$$ 

\textbf{Bound $\mathtt{Exp}_1$:} 
In this experiment the adversary wins if he can produce sufficient keys $K$ (Step 8 in Figure \ref{fig: constr}) and send them over an authenticated and confidential channel. The adversary can easily collect valid keys by acting the role of a certifier and running $\Oprove$ with the client $\User(\pk^*,\sk^*)$. However, he would not be able to send this set of keys over an authenticated channel, unless he can break the authentication property. Note that the definition does not consider Man-in-the-Middle attackers. Therefore, we have:
$$\prob{\mathtt{Exp}_1 = 1} \leq \advantage{\mathtt{auth}}{\bdv,\mathtt{Channel}}[].$$

\textbf{Bound $\mathtt{Exp}_2$:} 
Similar to experiment $\mathtt{Exp}_1$, the adversary needs to produce sufficient keys $K$ (Step 8 in Figure \ref{fig: constr}) that would be used to decrypt KYC data (Step 11 in Figure \ref{fig: constr}) that satisfies the certifier policy. However, the KYC data submitted by the adversary or client does not satisfy this policy. Therefore, the adversary can win if he submits one KYC data and can decrypt it to a different valid KYC data. 

\textit{Game $\Exp_1$:} The experiment $\Exp_1$ is identical to  $\mathtt{Exp}_2$, except the key derivation from \Ocommit (line \ref{submit:keys} of \Commit): 
$$\pcfor i \in \{1,2,3\} \pcdo \pcind[1] \key_i \gets \PRF_{\seed} (\pk,\allowbreak \recid, i),$$
is replaced by the following key generation 
$$\pcfor i \in \{1,2,3\} \pcdo \pcind[1] \key_i \gets \{0,1\}^\lambda.$$
The probability of the adversary to distinguish this change is bounded by the pseudo-randomness property of the \PRF, for $n$ \Ocommit oracle queries: 
$$\abs{\prob{\mathtt{Exp}_2 = 1} - \prob{\Exp_1 = 1}} \leq 3n \times \advantage{\mathtt{prf}}{\bdv(\adv),\PRF}[].$$

\textit{Bound on $\Exp_1$:} For adversary \adv\ to produce a different KYC data, he needs to compute a key $\key_1'$ that decrypts a cyphertext obtained by encrypting with a different key $\key_1 \neq \key_1'$. This reduces to the adversary breaking the WKD property of the symmetric encryption scheme.
$$ \prob{\Exp_1 = 1}  \leq \advantage{\mathtt{wkd}}{\bdv(\adv),\SE}[].$$	
 
The result of this theorem follows. 
\end{proof}	

%

\subsection{Practical Aspects in Building KYChain}

Typically, the on-boarding process is time consuming and costly both for financial institutions and their clients. These aspects can be even higher when taken together with inadequate handling of KYC data. KYChain intends to reduce these numbers significantly by performing the on-boarding a single time and re-using the certification from one FI to another. Furthermore, KYChain comes with a continuous monitoring system that allows FI to timely identify updates in their clients KYC data and request permission to view it.

The current estimates show FI spend 32 days/customer and upto \$20k/year for each client \cite{Marous}. These would only need to be supported at the first on-boarding for a client, as any additional on-boarding can be done by sharing the (certified) KYC data. 

\paragraph{Time for submitting KYC data} Clients submit encrypted KYC data to certifiers, that is then stored locally in a database. Then, a hash of the new database records are sent as a transaction to the public blockchain, i.e., Ethereum or Bitcoin. On average, Ethereum takes 15 seconds \cite{Ethereum-blocktime} to append a new block with multiple transactions, while Bitcoin takes 10 minutes \cite{Bitcoin-blocktime}. To consider a transaction irreversible, a number of additional blocks have to be added after the block that contains that transaction: 12-250 for Ethereum \cite{Ethereum-confirmation}, and 2-6 for Bitcoin \cite{Bitcoin-confirmation}. In the case of Ethereum, a KYC data transforms in a irreversible transaction in 10 minutes (15 seconds times 40 blocks). For Bitcoin, the time is longer at 40 minutes (10 minutes times 4 blocks).

\paragraph{Costs for submitting KYC data} Certifiers are supporting the transaction fees associated to submitting KYC data and certificates. Per block the Ethereum transaction fees is \$0.22 \cite{Ethereum-fees}, while the Bitcoin transaction fee is \$0.49 \cite{Bitcoin-fees}. Certifiers can optimize their costs by putting transactions together, and paying that block transaction fee per day, instead of per transaction.


\section{Conclusion}

In this paper we introduce KYChain, a privacy-preserving certification protocol for KYC data, that allows clients to securely share (certified) KYC data from one FI to another. To ensure integrity, we use a public blockchain with an external database to store encrypted (certified) KYC data. Clients have control over what KYC data is stored and with whom it is shared. Moreover, our system allows clients to update or remove any KYC data submitted. KYChain can significantly reduce time and costs with monitoring of changes to KYC data by the FIs through automated detection of updates in the ledger, and the on-boarding time by accepting certifications made by other FIs. We show that KYChain offers guarantees over confidentiality of the KYC data and authenticity of certification for KYC data compliance.  \\

\noindent \textbf{Acknowledgements.} Constantin C\u{a}t\u{a}lin Dr\u{a}gan and Mark Manulis were supported by the EPSRC project TAPESTRY (EP/N02799X).
%

\bibliographystyle{ACM-Reference-Format}
\bibliography{ref}

\end{document}